\newcommand{\calC }{\text{$\mathcal{C}$}}
\newcommand{\calCj }{\text{$\mathcal{C}^j$}}
\newcommand{\calCbar }{\text{$\overline{\mathcal{C}}$}}
\newcommand{\calCbarJ }{\text{$\overline{\mathcal{C}}^{\mathcal{J}}$}}
\newcommand{\calCJ }{\text{$\mathcal{C}^\mathcal{J}$}}
\newcommand{\calP }{\mbox{$\mathcal{P}$}}
\newcommand{\calJ }{\text{$\mathcal{J}$}}
\newcommand{\calN }{\text{$\mathcal{N}$}}
\newcommand{\ignore}[1]{}
\newtheorem{definition}{Definition}
\newtheorem{lemma}[definition]{Lemma}
\newtheorem{theorem}[definition]{Theorem}
\begin{document}

\renewcommand{\textfraction}{0}
\title{Local Optimality Certificates for LP Decoding of Tanner Codes}

\author{
      Nissim Halabi \thanks{School of Electrical Engineering, Tel-Aviv University, Tel-Aviv 69978, Israel. \mbox{{E-mail}:\ {\tt nissimh@eng.tau.ac.il}.}}
      \and
      Guy Even \thanks{School of Electrical Engineering, Tel-Aviv University, Tel-Aviv 69978, Israel.  \mbox{{E-mail}:\ {\tt guy@eng.tau.ac.il}.}}
}

\date{}

 \maketitle

\begin{abstract}
We present a new combinatorial characterization for local optimality
  of a codeword in an irregular Tanner code.  The main
  novelty in this characterization is that it is based on a linear
  combination of subtrees in the computation trees. These subtrees may
  have any degree in the local code nodes and may have any height (even greater than the girth).
  We expect this new characterization to lead to improvements in bounds for successful
  decoding.

  We prove that local optimality in this new characterization implies
  ML-optimality and LP-optimality, as one would expect.  Finally, we
  show that is possible to compute efficiently a certificate for the
  local optimality of a codeword given an LLR vector.
\end{abstract}

\section{Introduction} \label{sec:intro}

Modern coding theory deals with finding good codes that have efficient
decoders (see e.g.~\cite{RU08}). Many of the decoders for modern codes are sub-optimal in the
sense that they may fail to correct errors that are corrected by maximum likelihood (ML)
decoder, but their simplicity and speed make them attractive in
practice. Message-passing decoding algorithms based on
belief-propagation and linear-programming (LP) decoding are examples for
such sub-optimal decoders.

Tanner~\cite{Tan81} introduced graph representations of linear codes.
In the standard setting, check nodes compute the parity function. In
the generalized setting, check nodes use a local error-correcting
code.  One may view a check node with a local code as a coalescing of
multiple parity check nodes. Therefore, a code may have a sparser and
smaller representation when represented as a Tanner code in the generalized setting.
An example of a Tanner code with a simple bit-flipping
decoding algorithm was presented by Sipser and Spielman~\cite{SS96}.

Linear programming (LP) decoding was introduced by Feldman, Wainwright and Karger~\cite{Fel03,FWK05} for binary linear codes. LP decoding has been applied to several families of codes, among them RA codes,
turbo-like codes, LDPC codes, and expander codes. This work is motivated by the problem of
analyzing the probability of successful decoding using LP decoding for
Tanner codes. There are very few works on this problem, and they deal only with
specific cases. For example, Feldman and Stein~\cite{FS05} analyzed
special expander codes, and Goldenberg and Burshtein~\cite{GB10} deal
with repeat-accumulate codes.

The combinatorial characterization of a decoding success is based on a
test criterion that certifies the optimality of a codeword. That is,
given a received word $y$ and a codeword $x$, we consider a test that
answers the questions: is $x$ optimal with respect to $y$? and is it
unique?  We call these tests \emph{certificates} for the optimality of
a codeword.  Bounds on the word error probability may be computed by
analyzing the events for which certificates are provided.

Wiberg~\cite{Wib96} studied representations of codes using factor
graphs.  He used these representations to analyze message passing
decoding algorithms.  The analysis uses minimal combinatorial
structures (i.e., skinny trees) to characterize decoding errors when
using message passing decoding algorithms.

Koetter and Vontobel~\cite{KV06} analyzed LP decoding of regular LDPC
codes. Their analysis is based on decomposing each codeword (and
pseudocodeword) into a sum of skinny trees with uniform vertex
weights.  Arora \emph{et al.}~\cite{ADS09} extended the work in~\cite{KV06}
by introducing nonuniform weights to the vertices in the skinny trees.
For a BSC, Arora \emph{et al.} proved that local optimality implies both
ML-optimality and LP-optimality. They used analysis techniques,
similar to those used in density evolution analysis, to improve
bounds on the probability of a decoding error. This work was further extended in~\cite{HE11} to memoryless channels. The analysis in~\cite{KV06,ADS09,HE11} is limited to skinny trees, the height of which
is bounded by a quarter of the girth of the Tanner graph.

Vontobel~\cite{Von10} extended the decomposition of a codeword (and
pseudocodeword) to subtrees of the computation tree. This enabled him
to avoid the limitation of the height being bounded by the girth. The
decomposition is obtained by a random walk, and applies to irregular
Tanner graphs.

Jian and Pfister~\cite{Jian} analyzed a weighted min-sum decoding
algorithm for regular LDPC codes. They used skinny trees in the
computation tree, the height of which is greater than the girth of the
Tanner graph.  They also used local optimality to connect successful
decoding to LP-decoding.

\paragraph{Contributions.}
We present a new combinatorial characterization of local optimality for
irregular Tanner codes. This characterization uses
subtrees in the computation tree in which the degree of local code
nodes is not limited to $2$ (as opposed to skinny trees in previous
analyses). Since such trees are bigger, it is likely that this
characterization will lead to improved bounds for successful decoding.
We prove that local optimality in this characterization implies
ML-optimality and LP-optimality, as one would expect. Finally, we
show that is possible to compute efficiently a certificate for the
local optimality of a codeword given an LLR vector.

\section{Preliminaries} \label{sec:prelim}

\paragraph{Tanner-codes and Tanner graph representation.}

Let $G=(\mathcal{V} \cup \mathcal{J}, E)$ denote an edge-labeled bipartite-graph
between a set of $N$ vertices $\mathcal{V} = \{v_1,\ldots,v_N\}$ called \emph{variable nodes}, and a set of $J$ vertices $\mathcal{J} = \{C_1,\ldots,C_J\}$ called \emph{local-code nodes} where $deg_G(C_j) = n_j$.

Let $\calCbarJ \triangleq \big\{\calCbar^j :\ \calCbar^j \mathrm{\ is\
  an\ } [n_j,k_j,d_j] \mathrm{\ code } ,\ j \in [J] \big\}$ denote a
set of $J$ \emph{local-codes}. We associate every local-code
$\calCbar^j \in \calCbarJ$ with the respective local-code node $C_j
\in \calJ$. The set $E$ consists of edges $(v_i,C_j)$ such that
variable $v_i$ participates in local-code $\calCbar^j$. The labels
$\{1,\ldots,n_j\}$ of the edges incident to local-code node $C_j$
indicate the order of variable bit nodes in the corresponding
local-code $\calCbar^j$. Let $d^* \triangleq \min_{1\leqslant j
  \leqslant J}d_j$ denote the smallest minimum distance among the local codes.

Let a word $x = (x_1,\ldots,x_N) \in \mathds{F}_2^N$ denote an
assignment to variable nodes in $\mathcal{V}$. Let $\mathcal{V}_j$ denote the ordered set of variable nodes in $\calN_G(C_j)$ according to labels
of edges incident to $C_j$. Denote by $x_{\mathcal{V}_j} \in
\mathbb{F}_2^{n_j}$ the projection of the word $x = (x_1,\ldots,x_N)$
onto entries associated with $\mathcal{V}_j$.

The \emph{Tanner code} $\calC(G,\calCbarJ)$ based on labeled \emph{Tanner graph} $G$ is the code of block length $N$ with codewords $x \in \mathds{F}_2^N$ such that
$x_{\mathcal{V}_j}$ is a codeword in $\calCbar^j$ for every $j \in [J]$.

Consider a Tanner code $\calC(G,\calCbarJ)$, where
$\calCbarJ = \{\calCbar^j\}_{j \in [J]}$. We say that a word $x =
(x_1,...,x_N)$ \emph{satisfies} local-code $\calCbar^j$ if
$x_{\mathcal{V}_j} \in \calCbar^j$. Denote by \calCj\ the set of words
$x$ that \emph{satisfy} the local-code $\calCbar^j$, i.e., $\calCj
= \{x \in \mathds{F}_2^N : x_{\mathcal{V}_j} \in \calCbar^j \}$. The
resulting code $\calCj$ is the \emph{extension} of the local-code
$\calCbar^j$ from length $n_j$ to length $N$. We denote the set of
extended local-codes in $\calCbarJ$ by \calCJ. Clearly,
$\calC(G,\calCbarJ) \subseteq \calCj$. It holds that
\begin{equation}
\calC(G,\calCbarJ) = \bigcap_{j \in [J]}{\calCj}.
\end{equation}


\paragraph{LP decoding of Tanner codes.}
When transmitting over a discrete memoryless channel, the
receiver observes a measurement $y_i$ for every transmitted symbol $x_i$. In memoryless binary-input output-symmetric (MBIOS) channels, the \emph{log-likelihood ratio} (LLR) vector $\lambda \in \mathds{R}^N$ is defined by $\lambda_i (y_i) \triangleq
\ln\big(\frac{\mathbb{P}(y_i/x_i=0)}{\mathbb{P}(y_i/x_i=1)}\big)$ for every
input bit $i$. For a linear code \calC, \emph{Maximum-Likelihood (ML)
decoding} is equivalent to
\begin{equation} \label{eqn:MLdecoding}
 \hat{x}^{ML}(y) = \arg \min_{x \in \mathrm{conv}(\mathcal{C})} \langle
\lambda(y) , x \rangle,
\end{equation}
where $\mathrm{conv}(\mathcal{C})$ denotes the convex hull of the set $\mathcal{C}$.

Solving in general the optimization problem in (\ref{eqn:MLdecoding}) for linear codes is intractable. Feldman \emph{et al.}~\cite{Fel03,FWK05} introduced a linear programming relaxation for the problem of ML decoding of Tanner codes whose local codes are parity codes. LP decoding is based on minimizing an objective function over a fundamental polytope defined by a Tanner graph $G$. A natural extension of LP decoding to Tanner codes in the generalized setting is obtained by optimizing the objective function over a generalized fundamental polytope. Consider a Tanner code $\calC = \calC(G,\calCbarJ)$. The
\emph{generalized fundamental polytope} $\calP \triangleq
\calP(G,\calCbarJ)$ is defined as the convex hull
\begin{equation}
\calP \triangleq \bigcap_{\calCj \in \calCJ}{\mathrm{conv}(\calCj)}.
\end{equation}
Clearly, the generalized fundamental polytope $\calP(G,\calCbarJ)$
is a function of the (edge labeled) Tanner graph $G$ and the set of
local-codes $\calCbarJ$. Note that the representation of Tanner codes via Tanner graph and local codes is not unique.
Different representations $(G,\calCbarJ)$ of the same Tanner code \calC\ yield different generalized fundamental polytopes \calP\ for the same code \calC.
We note that for Tanner codes whose Tanner graphs have constant bounded right degree and a linear number of edges, the generalized fundamental polytope has an efficient representation. This family of codes is typically called generalized low-density parity-check codes.

Given an LLR vector $\lambda$ for a received word $y$, LP-decoding consists of solving the following optimization problem
\begin{equation} \label{eqn:LPdecoding}
\hat{x}^{LP}(y) \triangleq \arg\min_{x \in \mathcal{P}(G,\calCbarJ)} \langle
\lambda(y) , x \rangle.
\end{equation}

The difference between ML-decoding and LP-decoding is that the
fundamental polytope $\calP(G,\calCbarJ)$ may strictly contain the
convex hull of $\calC$. Vertices of $\calP(G,\calCbarJ)$ that are not
codewords of $\calC$ must have fractional components and are called \emph{pseudocodewords}.

We now introduce some graph terminology. Let $\mathcal{N}_G(v)$ denote
the set of neighbors of node $v$ in graph $G$, and for a set $S
\subseteq V$ let $\mathcal{N}_G(S)\triangleq\bigcup_{v \in
  S}\mathcal{N}_G(v)$. Let $P_{vu}(G)$ denote a shortest path between
nodes $v$ and $u$ in $G$.  Let $d_G(r,v)$ denote the
distance\footnote{Length of a shortest path} between nodes $r$ and $v$
in $G$.

An \emph{induced subgraph} is a subgraph obtained by deleting a set of
vertices. The \emph{subgraph of $G=(V,E)$ induced by $S \subseteq V$},
denoted by $G_S$, consists of $S$ and all edges in $E$, both endpoints of which
are contained in $S$. For a codeword $x\in \mathcal{C}(G) \subset
\{0,1\}^N$, let $G_x$ denote the subgraph of the Tanner graph $G$
induced by $V_x\cup\mathcal{N}(V_x)$ where $V_x = \{v_i\ |\ x_i =
1\}$.

\section{A Combinatorial Certificate for an ML Codeword} \label{sec:MLcertificate}

In this section we present combinatorial certificate, that applies both to ML-decoding and LP-decoding, for codewords of Tanner codes. A certificate is a proof that a given codeword is the unique solution of maximum-likelihood decoding and linear-programming decoding. The certificate is based on combinatorially structured weighted local configurations in the Tanner graph. These local configurations generalize the minimal configurations (skinny trees) presented by Vontobel~\cite{Von10} as extension to Arora {et al.}~\cite{ADS09}. We note that for Tanner codes, the support of each weighted local configuration is not necessarily a local valid configuration. For a given codeword, the certificate is computed by a message-passing algorithm on the Tanner graph of the code.

\emph{Notation:} Let $y \in \mathds{R}^n$ denote the word received from the channel.
Let $\lambda = \lambda(y)$ denote the LLR vector for $y$. Let $G=(\mathcal{V}\cup\mathcal{J},E)$ denote a Tanner graph, and let $\mathcal{C}(G)$ denote a Tanner code based on $G$ with local minimal distance $d^*$. Let $x \in
\mathcal{C}(G)$ be a candidate for $\hat{x}^{ML}(y)$ and
$\hat{x}^{LP}(y)$.

\begin{definition}[Path-Prefix Tree]
Consider a graph $G=(V,E)$ and a node $r \in V$. Let $\hat{V}$ denote the set of all backtrackless paths in $G$ with length at most $h$ that start at node $r$, and let
\begin{equation*}
\hat{E} \triangleq \big\{(p_1,p_2)\in\hat{V}\times\hat{V}\ | \ p_1\ \mathrm{is\ a\ prefix\ of\ p_2, \ } |p_1|+1=|p_2| \big\}.
\end{equation*}
We identify the empty path in $\hat{V}$ with $r$.
Denote by $\mathcal{T}_r^{h}(G) \triangleq (\hat{V},\hat{E})$ the \emph{path-prefix tree} of $G$ rooted at node $r$ with height $h$. We denote the fact that a path $\hat{p} \in \hat{V}$ ends at $v\in V$, by $\hat{p}\sim v$.
\end{definition}

\ignore{
\begin{definition}[Path-Prefix Tree]
Consider a graph $G=(V,E)$ and a node $r \in V$. The \emph{path-prefix tree}, $\mathcal{T}_r^{h}(G)$, of $G$ rooted at node $r$ with height $h$ is constructed as follows. Set $r$ to be the root node and iterate the following procedure $h$ times: (i) find all leaves of the tree (starting with the root), (ii) for each leaf, add as children replications of all nodes in $G$ corresponding to all neighbors except the parent node.
\end{definition}
}

The path-prefix tree is constructed by recursively unwrapping graph $G$ from a root node $v$ for $h$ iterations. When dealing with the analysis of belief propagation algorithms on graphical models, the path-prefix tree of a Tanner graph $G$ rooted at a variable node is usually referred to as the \emph{computation tree}. We make the distinction between the computation tree and the path-prefix tree since we consider also path-prefix trees of subgraphs of a Tanner graph $G$ and are not necessarily rooted at a variable node. We denote vertices in the path-prefix tree by $\hat{v}$,$\hat{u}$, etc. Vertices in $G$ are denoted by $v,u,$ etc.

The following definitions expands the combinatorial notion of minimal valid deviations~\cite{Wib96} and weighted minimal local-deviations (skinny trees)~\cite{ADS09, Von10} to the case of Tanner codes.
\begin{definition}[$d$-tree]
Consider a Tanner graph $G=(\mathcal{V}\cup\mathcal{J},E)$. A \emph{$d$-tree}, $\mathcal{T}[r,h,d](G)$, of height $h$ rooted at node $r$ is a subtree of $\mathcal{T}_r^{h}(G)$ such that every variable node has full degree and every local-code node has degree $d$.
\end{definition}

\begin{definition} [$\omega$-weighted subtree]
Consider a Tanner graph $G=(\mathcal{V}\cup\mathcal{J},E)$. Let $\mathcal{T}_{\hat{r}} = (\hat{\mathcal{V}}\cup\hat{\mathcal{J}},\hat{E})$ denote a subtree of $\mathcal{T}_r^{h}(G)$, and let $\omega=(\omega_1,\ldots,\omega_h)\in\mathds{R}_+^{h}$ denote a non-negative weight vector. Let $\mathcal{T}_{\hat{r}}^{(\omega)}:\hat{\mathcal{V}}\backslash\{\hat{r}\}\rightarrow \mathds{R}$ denote a weight function for variable nodes in $\mathcal{T}_{\hat{r}}$ as follows.
\begin{equation}
   \mathcal{T}_{\hat{r}}^{(\omega)}(\hat{v}) \triangleq \frac{\omega_t}{\deg_G(v)}\cdot \prod_{\hat{u}\in P_{\hat{r},\hat{v}}\backslash\{\hat{r},\hat{v}\}}\frac{1}{\deg_{\mathcal{T}_{\hat{r}}}(\hat{u})-1},
\end{equation}
where $t=\lceil\frac{d(\hat{r},\hat{v})}{2}\rceil$ and $\hat{v} \sim v$. Let  $\mathcal{T}_{\hat{r}}^{(\omega)}$ also denote the $\omega$-weighted subtree $\mathcal{T}_{\hat{r}}$ rooted at node $\hat{r}$.
\end{definition}

For any $\omega$-weighted subtree $\mathcal{T}_{\hat{r}}^{(\omega)}$ of $\mathcal{T}_{r}^{h}(G)$, let $\pi_G[\mathcal{T}_{\hat{r}}^{(\omega)}] \in \mathds{R}^{|\mathcal{V}|}$ denote the projection of $\mathcal{T}_{\hat{r}}^{(\omega)}$ to the Tanner graph $G$. That is, for every variable node $v$ in $G$,
\begin{equation}
\pi_G[\mathcal{T}_{\hat{r}}^{(\omega)}](v) = \begin{cases} \sum_{\hat{v}:\hat{v}\sim v}\mathcal{T}_{\hat{r}}^{(\omega)}(\hat{v}) & \mathrm{if\ } \{\hat{v}:\hat{v}\sim v\}\neq\emptyset,\\ 0 & \mathrm{otherwise.} \end{cases}
\end{equation}

For two vectors $x \in \{0,1\}^N$ and $f \in [0,1]^N$, let $x\oplus f \in [0,1]^N$ denote the \emph{relative point} defined by $(x\oplus f)_i = |x_i-f_i|$~\cite{Fel03}. The following definition is an extension of local-optimality~\cite{ADS09,Von10} to Tanner codes on memoryless channels.

\begin{definition}[local-optimality] \label{def:localOptimality}
Let $\mathcal{C}(G) \subset \{0,1\}^N$ denote a Tanner code with minimal local-distance $d^*$, and let $\omega \in [0,1]^h\backslash\{0^N\}$ denote a non-negative weight vector of length $h$. For any integer $2 \leqslant i \leqslant d^*$, let $\mathcal{B}_i^{(\omega)}$ denote the set of all vectors corresponding to projections by $\omega$-weighted $i$-trees to $G$, i.e., $\mathcal{B}_i^{(\omega)} = \big\{\pi_G[\mathcal{T}^{(\omega)}[r,2h,i](G)]\ \big|\  r \mathrm{\ is\ a\ variable\ node\ in\ }G\big\}$.
A codeword $x \in \{0,1\}^N$ is \emph{$(h,\omega,i)$-locally optimal for $\lambda \in \mathds{R}^N$} if for all vectors $\beta \in \mathcal{B}_i^{(\omega)}$,
\begin{equation}
\langle \lambda,x \oplus \beta \rangle > \langle \lambda, x \rangle.
\end{equation}
\end{definition}

Note that $\mathcal{B}_i^{(\omega)} \subseteq [0,1]^N$  for every weight vector $\omega \in [0,1]^h$. Based on random walks on the Tanner graph, Vontobel showed that $(h,\omega,2)$-local optimality is sufficient both for global optimality and LP optimality. The random walks are defined in terms derived from the generalized fundamental polytope.
We extend the results of Vontobel~\cite{Von10} to ``thicker'' skinny-trees by using probabilistic combinatorial arguments on graphs and the properties of graph cover decoding~\cite{VK05}.
Specifically, we prove that $(h,\omega,i)$-local optimality, for any $2 \leqslant i \leqslant d^{*}$, implies LP optimality (Theorem~\ref{thm:LPsufficient}).
\ignore{We first show how to extend the
proof that $(h,\omega,i)$-local optimality implies global optimality.}
Given the decomposition of Lemma~\ref{lemma:IntegralDecomposition} proved in  Section~\ref{sec:decomposition}, the following theorem is obtained by modification of the proof of~\cite[Theorem 2]{ADS09} or ~\cite[Theorem 6]{HE11}.

\begin{theorem}[local-optimality is sufficient for ML]\label{thm:MLsufficient}
Let $\mathcal{C}(G)$ denote a Tanner code with minimal local-distance $d^*$. Let $h$ be some positive integer and $\omega=(\omega_1,\ldots,\omega_h)\in[0,1]^h$ denote a non-negative weight vector. Let $\lambda\in\mathds{R}^N$ denote the LLR vector received from the channel, and suppose that $x$ is an $(h,\omega,i)$-locally optimal codeword for $\lambda$ and some $2\leqslant i \leqslant d^{*}$. Then $x$ is also the unique maximum-likelihood codeword for $\lambda$.
\end{theorem}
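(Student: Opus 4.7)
The plan is to mimic the standard local-optimality argument of~\cite[Theorem 2]{ADS09} and~\cite[Theorem 6]{HE11}, with $i$-trees playing the role previously held by skinny ($2$-)trees. First I would fix an arbitrary codeword $x' \in \mathcal{C}(G)$ with $x' \neq x$, and reduce the uniqueness claim to proving the strict inequality $\langle \lambda, x' \rangle > \langle \lambda, x \rangle$. Setting $z \triangleq x \oplus x' \in \{0,1\}^N$ and using linearity of $\mathcal{C}(G)$ (which is the intersection of the linear extended local codes $\mathcal{C}^j$), the vector $z$ is itself a nonzero codeword.

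Next I would linearize the local-optimality cost in $\beta$. A direct case analysis on $x_i \in \{0,1\}$ gives the identity $(x \oplus \beta)_i = x_i + (-1)^{x_i} \beta_i$ for every $\beta \in [0,1]^N$, whence
\begin{equation*}
\langle \lambda, x \oplus \beta \rangle - \langle \lambda, x \rangle = L_x(\beta), \qquad L_x(\beta) \triangleq \sum_{i=1}^{N} (-1)^{x_i} \lambda_i \beta_i.
\end{equation*}
Because $x \oplus z = x'$ (a property of the bitwise XOR viewed as relative point), this specializes to $\langle \lambda, x' \rangle - \langle \lambda, x \rangle = L_x(z)$, so it suffices to establish $L_x(z) > 0$.

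The third step is to feed the codeword $z$ into Lemma~\ref{lemma:IntegralDecomposition}, which I treat as a black box producing a positive scalar $\alpha > 0$, a finite family $\{\beta_k\} \subseteq \mathcal{B}_i^{(\omega)}$, and non-negative coefficients $\{p_k\}$, not all zero, with
\begin{equation*}
\sum_k p_k \beta_k = \alpha \cdot z.
\end{equation*}
Applying the linear functional $L_x$ to both sides and using the hypothesis that $x$ is $(h,\omega,i)$-locally optimal for $\lambda$, each summand $L_x(\beta_k) = \langle \lambda, x \oplus \beta_k \rangle - \langle \lambda, x \rangle$ is strictly positive. Thus $\alpha \cdot L_x(z) = \sum_k p_k L_x(\beta_k) > 0$, and dividing by $\alpha > 0$ yields $L_x(z) > 0$, as required. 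Since $x'$ was an arbitrary distinct codeword, $x$ is the unique ML codeword for $\lambda$.

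The main obstacle I anticipate is not the linearization step above, which is essentially formal once the decomposition is available, but rather verifying the hypotheses of Lemma~\ref{lemma:IntegralDecomposition}: its atoms $\beta_k$ must genuinely lie in $\mathcal{B}_i^{(\omega)}$ (i.e., be projections $\pi_G[\mathcal{T}^{(\omega)}[r,2h,i](G)]$ of $\omega$-weighted $i$-trees of height $2h$), with non-negative coefficients aggregating to the common multiple $\alpha z$. Extending the skinny-tree decomposition of~\cite{Von10} to $i$-trees with local-code degree $i > 2$ appears to require a more delicate combinatorial unwrapping — likely a randomized construction of subtrees in the computation tree, together with a careful accounting that shows the averaged weights at each variable node of $G$ sum uniformly over the support of $z$ and vanish off it. Once that lemma is in place, Theorem~\ref{thm:MLsufficient} follows immediately by the short linear argument above, exactly paralleling the passage from the decomposition of a codeword to ML-optimality in~\cite{ADS09,HE11}.
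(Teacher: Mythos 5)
Your proposal is correct and follows essentially the same route as the paper's proof: reduce to $\langle\lambda,x'\rangle>\langle\lambda,x\rangle$ via $z=x\oplus x'$, invoke Lemma~\ref{lemma:IntegralDecomposition} to write $\alpha z$ as a convex combination of elements of $\mathcal{B}_i^{(\omega)}$, and conclude by applying the affine cost functional and local optimality to each atom. The only (cosmetic) difference is that you phrase the final step through the linear functional $L_x$ rather than through the expectation identity $\langle\lambda,x\oplus\mathds{E}\beta\rangle=(1-\alpha)\langle\lambda,x\rangle+\alpha\langle\lambda,x'\rangle$ used in the paper.
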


  \begin{proof}
    We use the decomposition proved in Section~\ref{sec:decomposition}
    to show that for every codeword $x' \neq x$, $\langle \lambda , x'
    \rangle > \langle \lambda , x \rangle$.  Since $z \triangleq x
    \oplus x'$ is a codeword, by
    Lemma~\ref{lemma:IntegralDecomposition} there exists a
    distribution over the set $\mathcal{B}_i^{(\omega)}$, such that
    $\mathds{E}_{\beta \in \mathcal{B}_i^{(\omega)}} \beta = \alpha z$. Let $f:[0,1]^N
    \rightarrow \mathds{R}$ be the affine linear function defined by
    $f(u) \triangleq \langle \lambda , x \oplus u \rangle = \langle
    \lambda , x \rangle + \sum_{i=1}^{N}(-1)^{x_i}\lambda_i u_i$.
    Then,
    \begin{eqnarray*}
      \langle \lambda , x \rangle &<& \mathds{E}_{\beta \in \mathcal{B}_i^{(\omega)}} \langle \lambda , x \oplus \beta \rangle \ \ \ (\text{by local-optimality of $x$}) \\
      &=& \langle \lambda , x \oplus \mathds{E} \beta \rangle \ \ \ \ \ \ \ \ \ \ \ \ (\text{by linearity of $f$ and linearity of expectation}) \\
      &=& \langle \lambda , x \oplus \alpha z \rangle  \ \ \ \ \ \ \ \ \ \ \ \ \ (\text{by Lemma \ref{lemma:IntegralDecomposition}})\\
      &=& \langle \lambda , (1-\alpha)x + \alpha(x \oplus z) \rangle\\
      &=& \langle \lambda , (1-\alpha)x + \alpha x' \rangle \\
      &=& (1-\alpha) \langle \lambda , x \rangle + \alpha \langle \lambda , x' \rangle.
    \end{eqnarray*}
    which implies that $\langle \lambda , x' \rangle > \langle \lambda
    , x \rangle$ as desired.
  \end{proof}

In order to prove a sufficient condition for LP optimality, we
consider graph cover decoding introduced by Vontobel and Koetter~\cite{VK05}. We note that the characterization of graph cover decoding and its connection to LP decoding~\cite{VK05}, can be extended to the case of Tanner codes in the generalized setting.
We use the terms and notation of Vontobel and Koetter \cite{VK05} in the statement of Lemma~\ref{lemma:coverOptimality}. The following lemma shows that local-optimality based on $i$-trees is preserved after lifting to an $M$-cover. Note that the weight vector
must be scaled by the cover degree $M$.

\begin{lemma} \label{lemma:coverOptimality}
Let $\mathcal{C}(G)$ denote a Tanner code with minimal local-distance $d^*$, and let $\tilde{G}$ denote any
$M$-cover of $G$. Let $\omega \in [0,\frac{1}{M}]^h\backslash\{0^h\}$ for some positive integer $h$. Suppose that $x \in \mathcal{C}(G)$ is an
$(h,\omega,i)$-locally optimal codeword for $\lambda \in \mathds{R}^N$ for some $2\leqslant i \leqslant d^{*}$. Let
$\tilde{x}=x^{\uparrow M}\in \mathcal{C}(\tilde{G})$ and
$\tilde{\lambda}=\lambda^{\uparrow M} \in \mathds{R}^{N\cdot M}$
denote the $M$-lifts of $x$ and $\lambda$, respectively. Then
$\tilde{x}$ is an $(h,M \cdot \omega,i)$-locally optimal codeword for $\tilde{\lambda}$.
\end{lemma}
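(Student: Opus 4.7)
The plan is to use the fact that an $M$-cover $\tilde G$ of $G$ comes with a covering map $\pi:\tilde G\to G$ which is a local graph isomorphism preserving both degrees and the labeling (so, in particular, preserving the local-code structure). I would prove the contrapositive: assume some $\tilde\beta=\pi_{\tilde G}[\tilde{\mathcal{T}}^{(M\omega)}[\tilde r,2h,i](\tilde G)]$ in $\mathcal{B}_i^{(M\omega)}(\tilde G)$ witnesses a violation of local optimality of $\tilde x$ for $\tilde\lambda$, and then construct a vector $\beta\in\mathcal{B}_i^{(\omega)}(G)$ that witnesses a violation of local optimality of $x$ for $\lambda$.

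The first step is the bijection between $i$-trees. Because $\pi$ restricts to a bijection on the incident edges at every node, it induces an isomorphism of path-prefix trees $\mathcal{T}_{\tilde r}^{2h}(\tilde G)\cong\mathcal{T}_{r}^{2h}(G)$ where $r=\pi(\tilde r)$. Under this isomorphism, every $i$-tree $\tilde{\mathcal{T}}$ at $\tilde r$ in $\tilde G$ maps to an $i$-tree $\mathcal{T}$ at $r$ in $G$ of the same height and the same local-code degree, and vice versa. Let $\beta\triangleq\pi_G[\mathcal{T}^{(\omega)}]\in\mathcal{B}_i^{(\omega)}(G)$.

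The second step is the weight/projection comparison. Because $\deg_{\tilde G}(\tilde v)=\deg_G(\pi(\tilde v))$ and $\deg_{\tilde{\mathcal{T}}}(\hat u)=\deg_{\mathcal{T}}(\hat u)$ under the isomorphism, the definition of $\mathcal{T}_{\hat r}^{(\omega)}$ gives, entry by entry,
\begin{equation*}
\tilde{\mathcal{T}}^{(M\omega)}(\hat v)\;=\;M\cdot \mathcal{T}^{(\omega)}(\hat v).
\end{equation*}
Summing over all path-prefix nodes $\hat v$ in $\tilde{\mathcal{T}}$ whose endpoint lies in $\pi^{-1}(v)$ and using the bijection between such paths and the paths in $\mathcal{T}$ ending at $v$, one gets
\begin{equation*}
\sum_{\tilde v\in \pi^{-1}(v)} \tilde\beta(\tilde v)\;=\;M\cdot \beta(v) \qquad \text{for every } v\in\mathcal{V}(G).
\end{equation*}

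The third step is to convert the inner products. Since the map $\beta\mapsto \langle\lambda,x\oplus\beta\rangle-\langle\lambda,x\rangle=\sum_i(-1)^{x_i}\lambda_i\beta_i$ is linear in $\beta$ on $[0,1]^N$, and since $\tilde x=x^{\uparrow M}$, $\tilde\lambda=\lambda^{\uparrow M}$ are constant on each fiber $\pi^{-1}(v)$, grouping the sum over $\tilde v$ by fibers yields
\begin{equation*}
\langle\tilde\lambda,\tilde x\oplus\tilde\beta\rangle-\langle\tilde\lambda,\tilde x\rangle \;=\;\sum_{v}(-1)^{x_v}\lambda_v\sum_{\tilde v\in \pi^{-1}(v)}\tilde\beta(\tilde v)\;=\;M\bigl(\langle\lambda,x\oplus\beta\rangle-\langle\lambda,x\rangle\bigr).
\end{equation*}
By the $(h,\omega,i)$-local optimality of $x$, the right-hand side is strictly positive, so the left-hand side is too; this contradicts the assumed violation and completes the proof.

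The main obstacle is being careful that the combinatorial bijection really sends $i$-trees to $i$-trees with matching weights; this rests on the fact that covering maps preserve both ordinary degrees and the ordered neighbor-structure at local-code nodes (so that the local-code identity and the edge labels on which the $i$-tree degree condition is formulated are respected). Everything else is a direct bookkeeping on fibers of $\pi$ together with the linearity of $\beta\mapsto\langle\lambda,x\oplus\beta\rangle$.
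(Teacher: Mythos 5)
Your proposal is correct and follows essentially the same route as the paper's proof: take a violating $(M\cdot\omega)$-weighted $i$-tree in the cover, project it to the base graph via the covering map, and observe that the costs $\langle\tilde\lambda,\tilde x\oplus\tilde\beta\rangle$ and $\langle\tilde\lambda,\tilde x\rangle$ each scale by a factor of $M$ relative to their base-graph counterparts, contradicting local optimality of $x$. The only difference is that you spell out the path-prefix-tree isomorphism and the fiber-summing identity that the paper asserts implicitly in its equations (10)--(11).
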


\begin{proof}
Assume that $\tilde{x}=x^{\uparrow M}$ is not a $(h,M \cdot
\omega, i)$-locally optimal codeword for $\tilde{\lambda}=\lambda^{\uparrow
M}$. Then, there exists an $i$-tree $\mathcal{T}=\mathcal{T}[\tilde{r},h,i](\tilde{G})$ rooted at some variable node $\tilde{r}\in\tilde{\mathcal{V}}$ and a weight vector $\omega$, such that the projection $\tilde{\beta} = \pi_{\tilde{G}}[\mathcal{T}^{(M \cdot
\omega)}] \in [0,1]^{N\cdot M}$ of the $(M\cdot\omega)$-weighted $i$-tree $\mathcal{T}^{(M \cdot
\omega)}$ onto $\tilde{G}$ satisfies
\begin{equation} \label{eqn:proof1}
\langle \tilde{\lambda} , \tilde{x} \oplus \tilde{\beta} \rangle \leqslant \langle \tilde{\lambda} , \tilde{x} \rangle.
\end{equation}
Note that for $\tilde{x}\in \{0,1\}^{N\cdot M}$ and its projection $x = p(\tilde{x})\in \mathds{R}^N$, it holds that
\begin{eqnarray}
   \frac{1}{M} \langle \tilde{\lambda}, \tilde{x} \rangle &=& \langle \lambda , x \rangle,  \mathrm{\ \ \ and} \label{eqn:proof2}\\
  \frac{1}{M} \langle \tilde{\lambda} , \tilde{x} \oplus \tilde{\beta} \rangle &=& \langle \lambda , x \oplus \beta \rangle, \label{eqn:proof3}
\end{eqnarray}
where $\beta = \pi_{G}[\mathcal{T}^{(\omega)}] \in [0,1]^{N}$ is the projection of the $\omega$-weighted $i$-tree $\mathcal{T}$ onto the base graph $G$.
From (\ref{eqn:proof1}),
(\ref{eqn:proof2}), and (\ref{eqn:proof3}) we get that $\langle \lambda
, x \rangle \geqslant \langle \lambda , x \oplus \beta \rangle$,
contradicting our assumption on the $(h,\omega,i)$-local optimality of $x$.
Therefore, $\tilde{x}$ is a $(h,M \cdot \omega,i)$-locally optimal codeword
for $\tilde{\lambda}$ in $\mathcal{C}(\tilde{G})$.
\end{proof}

The following theorem is obtained as a corollary of  Theorem~\ref{thm:MLsufficient} and Lemma~\ref{lemma:coverOptimality}. The proof is based on arguments utilizing properties of graph cover decoding. Those arguments are used for a reduction from ML-optimality to LP-optimality similar to the reduction presented in the proof of~\cite[Theorem 8]{HE11}.

\begin{theorem}[local optimality is sufficient for LP optimality]\label{thm:LPsufficient}
For every Tanner code $\mathcal{C}(G)$ with minimal local-distance $d^*$, there exists a constant $M$ such that, if
\begin{enumerate}
\item $\omega \in [0,\frac{1}{M}]^h\backslash\{0^h\}$,and
\item $x$ is an $(h,\omega,i)$-locally optimal codeword for $\lambda \in \mathds{R}^N$ and some $2\leqslant i \leqslant d^{*}$,
\end{enumerate}
then $x$ is also the unique optimal LP solution given $\lambda$.
\end{theorem}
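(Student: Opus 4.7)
The plan is to argue by contradiction, using the Vontobel–Koetter graph-cover characterization of LP decoding~\cite{VK05} together with Lemma~\ref{lemma:coverOptimality} and Theorem~\ref{thm:MLsufficient}. The high-level idea is that local optimality of $x$ in the base graph $G$ lifts to local optimality of $\tilde{x} = x^{\uparrow M}$ in every $M$-cover (with the weight vector rescaled to $M\cdot\omega$), while any failure of $x$ to be the unique LP optimum manifests, via graph-cover decoding, as a codeword in some $M$-cover that beats $\tilde{x}$ under $\tilde{\lambda} = \lambda^{\uparrow M}$, contradicting the ML-optimality provided by Theorem~\ref{thm:MLsufficient} applied in the cover.

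First, I would fix the constant $M$ to be a cover degree that suffices to realize every vertex of the generalized fundamental polytope $\calP(G,\calCbarJ)$ as the projection $p(\tilde{z})$ of some codeword $\tilde{z}$ in some $M$-cover $\tilde{G}$ of $G$. Such a finite $M$ exists because $\calP$ is a rational polytope whose vertex denominators are bounded in terms of the local codes $\calCbarJ$, and the extension of the Vontobel–Koetter pseudocodeword-to-cover-codeword correspondence to the generalized setting then produces cover codewords over covers of degree dividing $M$. With this choice of $M$, the hypothesis $\omega \in [0,\tfrac{1}{M}]^h\setminus\{0^h\}$ guarantees $M\cdot\omega \in [0,1]^h\setminus\{0^h\}$, which is exactly the regime needed to invoke Theorem~\ref{thm:MLsufficient} in the cover.

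Next, I would suppose for contradiction that $x$ is not the unique LP optimum for $\lambda$. Then some vertex $z^* \neq x$ of $\calP$ satisfies $\langle \lambda, z^*\rangle \leqslant \langle \lambda, x\rangle$. By the choice of $M$, there is an $M$-cover $\tilde{G}$ and a codeword $\tilde{z}\in \mathcal{C}(\tilde{G})$ with $p(\tilde{z}) = z^*$. The cover/base identities used in the proof of Lemma~\ref{lemma:coverOptimality} give $\tfrac{1}{M}\langle \tilde{\lambda},\tilde{z}\rangle = \langle \lambda, z^*\rangle$ and $\tfrac{1}{M}\langle \tilde{\lambda}, \tilde{x}\rangle = \langle \lambda, x\rangle$, so $\langle \tilde{\lambda},\tilde{z}\rangle \leqslant \langle \tilde{\lambda},\tilde{x}\rangle$ while $\tilde{z}\neq\tilde{x}$. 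On the other hand, Lemma~\ref{lemma:coverOptimality} yields that $\tilde{x}$ is an $(h, M\cdot\omega, i)$-locally optimal codeword for $\tilde{\lambda}$ in $\mathcal{C}(\tilde{G})$, and since the minimal local-distance of $\mathcal{C}(\tilde{G})$ equals that of $\mathcal{C}(G)$, Theorem~\ref{thm:MLsufficient} applied in $\mathcal{C}(\tilde{G})$ forces $\tilde{x}$ to be the \emph{unique} ML codeword for $\tilde{\lambda}$, contradicting the existence of $\tilde{z}$.

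The main obstacle is the first step: producing a single constant $M$, depending only on $(G,\calCbarJ)$, that witnesses every vertex of $\calP$ as a cover-projection. In the pure parity-check setting this follows directly from the VK05 construction, but here one must verify that the extension of the Vontobel–Koetter correspondence to the generalized fundamental polytope $\calP$ produces a uniform bound on the required cover degree, determined by the local codes $\calCbar^j$. Once this uniform $M$ is in hand, the rest is a clean reduction via Lemma~\ref{lemma:coverOptimality} and Theorem~\ref{thm:MLsufficient}, mirroring the structure of the proof of~\cite[Theorem~8]{HE11}.
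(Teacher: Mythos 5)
Your proposal is correct and follows essentially the same route the paper intends: the paper gives no detailed proof but states that Theorem~\ref{thm:LPsufficient} follows from Theorem~\ref{thm:MLsufficient} and Lemma~\ref{lemma:coverOptimality} via the graph-cover-decoding reduction of~\cite{VK05}, mirroring the proof of~\cite[Theorem~8]{HE11}, which is exactly the argument you spell out. You also correctly isolate the one point the paper leaves implicit --- the existence of a uniform cover degree $M$ realizing every vertex of the generalized fundamental polytope as a cover-codeword projection, which the paper only asserts as an extension of~\cite{VK05} to the generalized setting.
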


\subsection{Verifying local optimality}
Let $G=(\mathcal{V}\cup\mathcal{J},E)$ denote a Tanner graph, and let
$\mathcal{C}(G)$ denote a Tanner code with minimal
local-distance $d^*$. Let $h$ denote a positive integer and $\omega
\in [0,1]^h$. Consider a codeword $x \in \mathcal{C}(G)$ and any
integer $2 \leqslant i \leqslant d^*$. Note that for a given LLR
vector $\lambda$, the weighted $i$-tree
$\mathcal{T}^{(\omega)}[r,h,i](G)$ that minimizes $\langle \lambda,
x\oplus \beta \rangle$ for all vectors $\beta$ corresponding to
projections of $\omega$-weighted $i$-trees rooted at $r$, can be
computed by a simple message passing algorithm. The messages are
propagated from the leaves of $\mathcal{T}^{(\omega)}_r(G)$ to the
root $r$. In fact, message-passing algorithms on computation trees run
simultaneously for every root in $G$. After $h$ iterations we can
verify if the codeword $x$ is indeed $(h,\omega,i)$-locally optimal
for $\lambda$ (according to Definition~\ref{def:localOptimality}). We
can therefore compute an $(h,\omega,i)$-local optimality certificate
in $O(|E|\cdot h)$ time.

\ignore{
Describe an iterative algorithm: weighted min-sum algorithm.

\begin{theorem}
Let $\mathcal{C}(G)$ denote a Tanner code whose local-codes' minimum distance is at least $d$. Let $\omega \in \mathds{R}^h$ denote a non-negative weight vector. Let $\lambda\in\mathds{R}^N$ denote the LLR vector received from the channel, and suppose that $x$ is an $(h,\omega,i)$-locally optimal codeword for $\lambda$ and some $2\leqslant i \leqslant d$. Then, the $\omega$-reweighted min-sum algorithm on input $\lambda$ and $i$ computes $x$ in $h$ iterations.
\end{theorem}
}

\section{Constructing Codewords from Weighted Trees Projections} \label{sec:decomposition}

This section features Lemma~\ref{lemma:IntegralDecomposition}, which is the key structural lemma in the proof of Theorem~\ref{thm:MLsufficient}. This Lemma shows that every codeword of a Tanner code can be constructed by a summation over a finite set of projections of weighted trees in the computation trees of $G$.

\begin{lemma} \label{lemma:IntegralDecomposition}
Let $\mathcal{C}(G)$ denote a Tanner code with minimal local-distance $d^*$, and let $h$ denote some positive integer. For every codeword $x \neq 0^N$, and for every $2 \leqslant i \leqslant d^{*}$, there exists a distribution over $i$-trees $\mathcal{T}$ of $G$ of height $h$ and a positive integer $H$ such that, for every weight vector $\omega \in [0,\frac{1}{H}]^h\backslash\{0^h\}$, there exists an $\alpha \in (0,1]$, such that
  \[\mathds{E}_{\mathcal{T} \in \mathcal{B}_i^{(\omega)}}\big[
\pi_G[\mathcal{T}]\big] = \alpha x.\]
\end{lemma}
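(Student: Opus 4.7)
The plan is to exhibit an explicit distribution over $i$-trees whose expected projection equals $\alpha x$. The construction is a random branching process on the support $V_x = \{v : x_v = 1\}$, and the expected projection is then evaluated by reinterpreting the resulting sum as the law of a non-backtracking random walk on $G_x$.

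I would begin from the structural fact that for every local-code node $C_j$ adjacent to $V_x$, the local codeword $x_{\mathcal{V}_j}$ is nonzero, hence by the minimum distance of $\overline{\mathcal{C}}^j$ we have $|V_x \cap \mathcal{N}_G(C_j)| \geq d_j \geq d^* \geq i$. This is precisely what permits growing an $i$-tree that stays inside $G_x$. The random tree $\mathcal{T}$ is defined as follows: pick the root $r$ uniformly from $V_x$; at every non-root variable node of $\mathcal{T}$ include all non-parent local-code neighbors as children; at every local-code node $\hat{C}_j$ with parent $\hat{v}_p$ choose $i-1$ additional children uniformly at random from $(V_x \cap \mathcal{N}_G(C_j)) \setminus \{v_p\}$; stop at depth $2h$. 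Because $\mathcal{T} \subseteq G_x$, the projection $\pi_G[\mathcal{T}^{(\omega)}]$ automatically vanishes on $\mathcal{V} \setminus V_x$, so only the coordinates $v \in V_x$ remain to be analyzed.

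The heart of the argument is the per-depth computation of $\mathds{E}[\pi_G[\mathcal{T}^{(\omega)}](v)]$ for $v \in V_x$. I would expand this expectation as a sum over pairs $(r, P)$, where $P = (r, C_{j_1}, v_1, \ldots, C_{j_t}, v)$ is a backtrackless path of length $2t$ in $G_x$ ending at $v$. Multiplying the probability that $P$ survives in $\mathcal{T}$ (which is proportional to $\prod_{s=1}^{t}(i-1)/(|\mathcal{N}_{j_s}\cap V_x|-1)$) by the tree weight $\mathcal{T}^{(\omega)}(\hat{v})$ (which contains the factor $(1/(i-1))^t$), the branching factors $(i-1)^{\pm t}$ cancel exactly. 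The remaining product is, up to the terminal factor $1/\deg_G(v)$, the probability that a $V_x$-restricted non-backtracking random walk started at $v$ traces the reverse of $P$; this walk is everywhere well defined because $d^* \geq 2$ and $\deg_G(v) \geq 2$ on $V_x$, so the sum over all paths of length $2t$ starting at $v$ collapses to $1$. The outcome is $\mathds{E}[\pi_G[\mathcal{T}^{(\omega)}](v)]_{2t} = \omega_t / |V_x|$, independent of $v$. Summing over $t = 1, \ldots, h$ gives $\alpha = |V_x|^{-1}\sum_{t=1}^h \omega_t$, and any $H \geq h$ guarantees $\alpha \in (0,1]$ for every $\omega \in [0,1/H]^h \setminus \{0^h\}$.

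The step I expect to demand the most care, and would write out most explicitly, is this final bookkeeping: checking the exact cancellation of the $(i-1)$-factors between the branching probabilities and the weight definition, and the reinterpretation of the remaining product as a reversed $V_x$-restricted non-backtracking walk whose total mass is $1$. The decisive symmetry is that the intermediate factors $(\deg_G(v_s)-1)^{-1}$ and $(|\mathcal{N}_{j_s}\cap V_x|-1)^{-1}$ depend only on the unordered sets of intermediate variable and local-code nodes on $P$, so they are invariant under reversing $P$. Once this reversal identification is clean, the thick-branching $i > 2$ irregular Tanner-code analysis reduces structurally to the $i = 2$ skinny-tree arguments of Koetter--Vontobel and Arora--Daskalakis--Steurer.
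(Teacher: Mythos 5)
Your proof is correct and follows essentially the same route as the paper: your path-reversal/unit-mass-walk computation is exactly the content of Lemma~\ref{lemma:prefixDecomposition}, and your $(i-1)$-cancellation between branching probabilities and tree weights is exactly Lemma~\ref{lemma:skiniesDecomposition}, merged into a single expectation computation over the random $i$-tree grown in $G_x$. The only presentational difference is that the paper modularizes these two steps into separate lemmas, while you also make explicit the fact (implicit in the paper's use of $G_x$) that $|V_x\cap\mathcal{N}_G(C_j)|\geqslant d^*\geqslant i$ guarantees the $i$-tree can be grown.
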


We first prove that every codeword $x\in\mathcal{C}(G)$ can be decomposed into exactly $\|x\|_1$ weighted path-prefix trees (see Lemma~\ref{lemma:prefixDecomposition}). Then we show that every weighted path-prefix tree can be decomposed to a set of weighted $i$-trees (see Lemma~\ref{lemma:skiniesDecomposition}). Putting these two results together yields Lemma~\ref{lemma:IntegralDecomposition}.

\begin{lemma}\label{lemma:prefixDecomposition}
Let $\mathcal{C}(G)$ denote a Tanner code and let $h$ denote some positive integer. For every codeword $x \neq 0^N$, and for every weight vector $\omega \in \mathds{R}_+^h$,
\begin{equation*}
\big(\sum_{t=1}^{h}\omega_t\big)\cdot x = \sum_{r:x_r=1}\pi_G[\mathcal{T}_r^{(\omega)}(G_x)].
\end{equation*}
\end{lemma}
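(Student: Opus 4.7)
My plan is to verify the identity coordinate by coordinate on the variable nodes of $G$. Fix a variable node $v$ and compute
\[
S(v)\;:=\;\sum_{r:\,x_r=1}\pi_G\bigl[\mathcal{T}_r^{(\omega)}(G_x)\bigr](v).
\]
If $x_v=0$ then $v\notin G_x$, so no path-prefix-tree vertex ends at $v$ and $S(v)=0$; the identity is trivial on that coordinate. The substance is to show that $S(v)=\sum_{t=1}^h\omega_t$ whenever $x_v=1$.

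To organize the sum, I would stratify by depth in the prefix tree. A non-root variable vertex $\hat v$ of $\mathcal{T}_r^{(\omega)}(G_x)$ at depth $2t$ corresponds bijectively to a backtrackless path $p=(r,C_{j_1},v_1,\ldots,C_{j_t},v)$ in $G_x$. At every internal vertex $\hat u$ of the path-prefix tree the degree minus one coincides with $\deg_{G_x}(u)-1$ (one edge to the parent, the rest to children), so unwinding the product in the definition gives
\[
\mathcal{T}_r^{(\omega)}(\hat v)=\frac{\omega_t}{\deg_G(v)}\prod_{s=1}^{t}\frac{1}{\deg_{G_x}(C_{j_s})-1}\prod_{s=1}^{t-1}\frac{1}{\deg_{G_x}(v_s)-1}.
\]
A small but essential observation I would include is that for $v\in V_x$ every local-code neighbor of $v$ belongs to $\mathcal{N}(V_x)$, so $\deg_{G_x}(v)=\deg_G(v)$; this lets me replace the leading $\deg_G(v)$ by $\deg_{G_x}(v)$.

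The key step is to recognize the resulting expression as the probability of a non-backtracking random walk on $G_x$ of length $2t$ starting at $v$ that traces $p$ in reverse: the first step chooses uniformly among $\deg_{G_x}(v)$ neighbors of $v$, and each subsequent step chooses uniformly among the $\deg_{G_x}(\cdot)-1$ admissible non-backtracking continuations. Hence $\mathcal{T}_r^{(\omega)}(\hat v)/\omega_t$ equals $\Pr[\text{walk traces }p^{-1}]$. Because $G$ is bipartite, after $2t$ steps the walk is at a variable node, and because the walk is confined to $G_x$ that variable node lies in $V_x$. Summing over all roots $r\in V_x$ and all length-$2t$ backtrackless paths from $r$ to $v$ exhausts the sample space, so
\[
\sum_{r\in V_x}\sum_{p:\,r\to v,\ |p|=2t}\mathcal{T}_r^{(\omega)}(\hat v)\;=\;\omega_t\cdot\Pr\bigl[\text{walk from }v\text{ lands in }V_x\text{ at step }2t\bigr]\;=\;\omega_t.
\]
Summing over $t=1,\ldots,h$ yields $S(v)=\sum_{t=1}^{h}\omega_t$ on the support of $x$, which finishes the proof.

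The main obstacle is not conceptual depth but bookkeeping: cleanly matching the product over intermediate path-prefix-tree vertices to the telescoping factors of the reverse-walk probability, and carefully invoking the bipartite structure together with the fact that $G_x$ is induced by $V_x\cup\mathcal{N}(V_x)$ so that a non-backtracking walk cannot escape it. Once those two points are laid out, the identity falls out of the total-probability computation above.
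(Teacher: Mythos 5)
Your proof is correct and is essentially the paper's argument: both work coordinate-by-coordinate, use the path-reversal bijection to view the total weight deposited on a variable node $v$ by all roots $r \in V_x$ as a sum over backtrackless paths leaving $v$ in $G_x$, and show that the weights contributed at each even depth $2t$ sum to $\omega_t$. The paper establishes this last fact by an explicit telescoping recursion on the path length, which is precisely the level-by-level form of the total-probability computation you phrase in terms of a non-backtracking random walk confined to $G_x$.
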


\begin{proof}
Let us consider two variable nodes $u,v \in G_x$.
Notice that $|\{\hat{v} \in \mathcal{T}_u^h(G_x)\ : \ \hat{v} \sim v\}| = |\{\hat{u} \in \mathcal{T}_v^h(G_x)\ : \ \hat{u} \sim u\}|$. Indeed, for every path from the root of $\mathcal{T}_u^h(G_x)$ to a node $\hat{v} \in \{\hat{v}: \hat{v}\sim v\}$, there exists a unique reversed path in $\mathcal{T}_v^h(G_x)$ from the root to a node $\hat{u}$ such that $\hat{u}\sim u$.
Let $\overrightarrow{p}=(v,\ldots,\hat{r})$ denote a path in the path-prefix tree $\mathcal{T}_v^h$ rooted at $v$, then $\overleftarrow{p} = (r,\ldots,\hat{v})$ denotes the corresponding reversed path in the path-prefix tree $\mathcal{T}_r^h$.

Consider an all-one weight vector $\eta=1^h$. In
(\ref{eqn:recursive})-(\ref{eqn:base}), let $\mathcal{T}_r^{(\eta)}
\triangleq \mathcal{T}_r^{(\eta)}(G_x)$, $\deg(\cdot) \triangleq
\deg_{G_x}(\cdot)$, $d(\cdot,\cdot) \triangleq
d_{\mathcal{T}_v^{2h}(G_x)}(\cdot,\cdot)$, $\hat{r} \sim r$, and
$\hat{u} \sim u$. Let $q\circ p$ denote the concatenation of path $q$ with path $p$. Equation~(\ref{eqn:recursive}) holds for every $1\leqslant i \leqslant 2h$.

  \begin{eqnarray}\label{eqn:recursive}
  \nonumber
 \sum_{\{\overrightarrow{p}=(v,\ldots,\hat{r}):d(v,\hat{r})=i\}}\mathcal{T}_r^{(\eta)}(\overleftarrow{p}) &=&
\sum_{\{\overrightarrow{q}=(v,\ldots,\hat{u}):d(v,\hat{u})=i-1\}}\sum_{\{\hat{r}\in \mathcal{N}(\hat{u}): d(v,\hat{r})=i\}}\mathcal{T}_r^{(\eta)}\big(\overleftarrow{\overrightarrow{q}\circ (r)}\big)\\
  \nonumber
  &=&
  \sum_{\{\overrightarrow{q}=(v,\ldots,\hat{u}):d(v,\hat{u})=i-1\}}\sum_{\{\hat{r}\in \mathcal{N}(\hat{u}): d(v,\hat{r})=i\}}\frac{1}{\deg(u)-1}\mathcal{T}_u^{(\eta)}(\overleftarrow{q})\\
  \nonumber
  &=&
  \sum_{\{\overrightarrow{q}=(v,\ldots,\hat{u}):d(v,\hat{u})=i-1\}}\mathcal{T}_u^{(\eta)}(\overleftarrow{q})\cdot\sum_{\{\hat{r}\in \mathcal{N}(\hat{u}): d(v,\hat{r})=i\}}\frac{1}{\deg(u)-1}\\
  &=&
  \sum_{\{\overrightarrow{q}=(v,\ldots,\hat{u}):d(v,\hat{u})=i-1\}}\mathcal{T}_u^{(\eta)}(\overleftarrow{q}).
\end{eqnarray}

Note that the reversed paths $\overleftarrow{p}$ and $\overleftarrow{q}$ in the summations of (\ref{eqn:recursive}) end at a node $\hat{v}$ such that $\hat{v}\sim v$. Equation~(\ref{eqn:recursive}) implies that the sum of all $\eta$-weighted assignments to nodes $\hat{v}\sim v$ in $\{\mathcal{T}_r^{(\eta)}(G_x)\ : \ x_r=1\}$ that correspond to paths of length $i$ does not depend on $i$.

In particular, for $i=1$, $\sum_{\{\overrightarrow{p}=(v,\hat{r})\}}\mathcal{T}_r^{(\eta)}(\overleftarrow{p})=1$. It follows that for every $1\leqslant i \leqslant 2h$,
\begin{equation} \label{eqn:base}
\sum_{\{\overrightarrow{p}=(v,\ldots,\hat{r}):d(v,\hat{r})=i\}}\mathcal{T}_r^{(\eta)}(\overleftarrow{p}) = 1.
\end{equation}

Note that for every two variable nodes $v,r$, it holds that $\mathcal{T}_r^{(\omega)}(\hat{v}) = \omega_{d(r,\hat{v})/2}\cdot\mathcal{T}_r^{(\eta)}(\hat{v})$. Hence, $\sum_{\{\overrightarrow{p}=(v,\ldots,\hat{r}):d(v,\hat{r})=2i\}}\mathcal{T}_r^{(\omega)}(\overleftarrow{p}) = \omega_i$. We conclude that for every variable node $v$ in $G_x$
\begin{equation}
\sum_{r:x_r=1}\pi[\mathcal{T}_r^{(\omega)}(G_x)](v) = \big(\sum_{i=1}^{h}\omega_i\big),
\end{equation}
and the claim follows.
\end{proof}

\begin{lemma} \label{lemma:skiniesDecomposition}
For every connected subgraph $G_S$ of a Tanner graph $G$, let $d$ denote the minimal degree of a local-code node in $G_S$. Then for every variable node $r \in G_S$, a positive integer $h$, $2 \leqslant i \leqslant d$, and every weight vector $\omega \in \mathds{R}_+^h$, it holds that
  \[ \mathcal{T}_r^{(\omega)}(G_S) = \mathds{E}\big[
\mathcal{T}^{(\omega)}[r,2h,i](G_S)\big]\]
with respect to a uniform distribution over $i$-trees $\mathcal{T}$ of $G_S$ rooted at $r$ with height $2h$.
\end{lemma}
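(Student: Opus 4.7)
The plan is to realize a uniformly random $i$-tree of $\mathcal{T}_r^{2h}(G_S)$ as the outcome of independent local choices at each local-code node of the path-prefix tree, then for every variable node $\hat{v}$ compute its survival probability, multiply by its conditional weight in the $i$-tree, and observe that the factors of $i-1$ from survival and from the weight formula cancel exactly, leaving the weight prescribed by $\mathcal{T}_r^{(\omega)}(G_S)$ on the nose.

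First I would observe that an $i$-tree rooted at $r$ of height $2h$ in $G_S$ is uniquely determined by choosing, independently at each local-code node $\hat{C}$ appearing in $\mathcal{T}_r^{2h}(G_S)$, a size-$(i-1)$ subset of the $\deg_{G_S}(C)-1$ children of $\hat{C}$; variable nodes always retain their full degree. Because $i\leqslant d\leqslant \deg_{G_S}(C)$ at every local-code node of $G_S$, these choices are feasible, and the uniform distribution over $i$-trees is therefore exactly the product of these independent uniform choices at local-code nodes.

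Next I would fix a variable node $\hat{v}$ at depth $2t$ in $\mathcal{T}_r^{2h}(G_S)$ with $\hat{v}\sim v$. The path $P_{\hat{r},\hat{v}}$ contains exactly $t$ local-code nodes; for $\hat{v}$ to lie in a random $i$-tree $\mathcal{T}$, each of these local-code nodes must include, in its $(i-1)$-subset, the unique child on the path toward $\hat{v}$. By the product structure,
\[\Pr[\hat{v}\in\mathcal{T}] \;=\; \prod_{\hat{C}\in P_{\hat{r},\hat{v}},\,\hat{C}\text{ local-code}}\frac{i-1}{\deg_{G_S}(C)-1}.\]
Conditioned on $\hat{v}\in\mathcal{T}$, its weight in $\mathcal{T}^{(\omega)}$ is
\[\frac{\omega_t}{\deg_G(v)}\,\prod_{\hat{u}\in P_{\hat{r},\hat{v}}\setminus\{\hat{r},\hat{v}\},\,\hat{u}\text{ variable}}\!\frac{1}{\deg_{G_S}(u)-1}\,\prod_{\hat{C}\in P_{\hat{r},\hat{v}},\,\hat{C}\text{ local-code}}\!\frac{1}{i-1},\]
because the degree of $\hat{C}$ in $\mathcal{T}$ equals $i$ while every internal variable node keeps its full $G_S$-degree.

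Multiplying the survival probability by the conditional weight, the $(i-1)$ factors cancel exactly, and by linearity of expectation
\[\mathds{E}\bigl[\mathcal{T}^{(\omega)}[r,2h,i](G_S)(\hat{v})\bigr] \;=\; \frac{\omega_t}{\deg_G(v)}\prod_{\hat{u}\text{ variable}}\frac{1}{\deg_{G_S}(u)-1}\prod_{\hat{C}\text{ local-code}}\frac{1}{\deg_{G_S}(C)-1} \;=\; \mathcal{T}_r^{(\omega)}(G_S)(\hat{v}).\]
Since this holds for every variable node of the path-prefix tree, the asserted equality of weight functions follows. The only real obstacle is combinatorial bookkeeping of degrees along the path; the structural fact from which the cancellation of $(i-1)$ factors springs is simply that sampling a uniform $i$-tree factorizes over the local-code nodes and touches nothing on the variable-node side.
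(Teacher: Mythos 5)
Your proof is correct and follows essentially the same route as the paper's: both compute, for each variable node $\hat{v}$ of the path-prefix tree, the product of the survival probability $\prod (i-1)/(\deg-1)$ taken over local-code nodes on the path to $\hat{v}$ with the (deterministic) weight of $\hat{v}$ in any $i$-tree containing it, and observe that the $(i-1)$ factors cancel to leave $\mathcal{T}_r^{(\omega)}(G_S)(\hat{v})$. You are merely more explicit than the paper in asserting that the sampling distribution factorizes into independent choices at local-code nodes; note that this product distribution is what the argument actually uses, and for irregular graphs it need not coincide with the literally uniform distribution over $i$-trees --- a subtlety your write-up shares with the paper's own proof.
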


\begin{proof}
Consider a subgraph $G_S$ of a Tanner graph $G$, and a positive integer $i\leqslant d$. Let $\mathcal{T}_r^{(\omega)}(G_S)$ denote an $\omega$-weighted path-prefix tree rooted at node $r$ with height $2h$. We want to show that the uniform distribution over $\omega$-weighted $i$-trees has the property that the expectation of trees over the distribution equals $\mathcal{T}_r^{(\omega)}(G_S)$.

We grow an $i$-tree rooted at $r$ randomly in the path-prefix tree
$\mathcal{T}_r^{2h}(G_S)$. That is, start from the root $r$. For each
variable node take all it's children, and for each local-code node
choose $i$ distinct children uniformly at random. Let
$\mathcal{T}[r,2h,i]$ denote such a random $i$-tree, and consider a
variable node $\hat{v} \in \mathcal{T}_r^{2h}(G_S)$. Note that
$\mathcal{T}^{(\omega)}[r,2h,i](\hat{v})$ is constant and does not
depend on the random process. Equation~(\ref{eqn:i-tree-grow}) develops the equality
\begin{align*}
\mathds{E}\big[\mathcal{T}^{(\omega)}[r,2h,i](\hat{v})\big] &=
\mathcal{T}_r^{(\omega)}(\hat{v}).
\end{align*}
  \begin{eqnarray}\label{eqn:i-tree-grow}
\mathds{E}\big[\mathcal{T}^{(\omega)}[r,2h,i](\hat{v})\big] &=&
\sum_{\{\mathcal{T}[r,2h,i] \in\mathcal{T}_r^{2h}(G_S)\}}\mathbb{P}(\mathcal{T}[r,2h,i])\cdot \mathcal{T}^{(\omega)}[r,2h,i](\hat{v})\nonumber \\
    &=& \sum_{\{\mathcal{T}[r,2h,i] \in\mathcal{T}_r^{2h}(G_S):\hat{v}\in\mathcal{T}[r,2h,i]\}}\mathbb{P}(\mathcal{T}[r,2h,i])\cdot \mathcal{T}^{(\omega)}[r,2h,i](\hat{v})\nonumber \\
    &=&  \mathcal{T}^{(\omega)}[r,2h,i](\hat{v}) \cdot \sum_{\{\mathcal{T}[r,2h,i] \in\mathcal{T}_r^{2h}(G_S):\hat{v}\in\mathcal{T}[r,2h,i]\}}\mathbb{P}(\mathcal{T}[r,2h,i])\nonumber \\
    &=&  \mathcal{T}^{(\omega)}[r,2h,i](\hat{v}) \cdot \mathbb{P}(\hat{v}\in\mathcal{T}[r,2h,i])\nonumber \\
    &=&  \mathcal{T}^{(\omega)}[r,2h,i](\hat{v})\cdot\prod_{\hat{u}\in P_{r\hat{v}}\backslash\{r,\hat{v}\}\cap \hat{\mathcal{J}}}
    \frac{i-1}{\deg(\hat{u})-1}\nonumber \\
    &=&  \frac{\omega_{d(r,\hat{v})/2}}{\deg(\hat{v})\cdot(i-1)^{d(\hat{r},\hat{v})/2}}\cdot\prod_{\hat{u}\in P_{r\hat{v}}\backslash\{r,\hat{v}\}\cap \hat{\mathcal{V}}}
    \frac{1}{\deg(\hat{u})-1}\cdot\prod_{\hat{u}\in P_{r\hat{v}}\backslash\{r,\hat{v}\}\cap \hat{\mathcal{J}}}
    \frac{i-1}{\deg(\hat{u})-1}\nonumber \\
    &=& \frac{\omega_{d(r,\hat{v})/2}}{\deg(\hat{v})}\cdot\prod_{\hat{u}\in P_{r\hat{v}}\backslash\{r\}}
    \frac{1}{\deg(\hat{u})-1}\nonumber \\
    &=&  \mathcal{T}_r^{(\omega)}(\hat{v})
  \end{eqnarray}
as required.
\end{proof}

\section{Conclusion} \label{sec:conclusion}

A new combinatorial characterization for local optimality of a
codeword in an irregular Tanner code is presented.  The
main novelty in this characterization is that it is based on a linear
combination of subtrees in the computation trees. These subtrees may
have any degree $i$ in the local code nodes, for $2\leq i \leq d^*$.
This increased degree enables each subtree to be larger than a skinny
tree.  The larger a subtree is in the decomposition, the smaller the
probability that its cost is negative. Thus, we expect this new
characterization to lead to improvements in bounds for successful
decoding.

It is interesting to develop and analyze decoding algorithms for
irregular Tanner codes that are based on this new
characterization of local optimality.


\end{document}